\documentclass[12pt, a4paper]{article}
\usepackage{hyperref}
\usepackage{amsmath}
\usepackage{amssymb}
\usepackage{amsthm}
\usepackage{fitch}
\usepackage{tabularx}
\usepackage{multicol}
\usepackage{makecell}
\usepackage{thmtools}
\usepackage{enumitem}
\usepackage[bottom]{footmisc}

\newtheoremstyle{break}
  {\topsep}{\topsep}%
  {\itshape}{}%
  {\bfseries}{}%
  {\newline}{}%
\theoremstyle{break}
\newtheorem{definition}{Definition}[section]
\newtheorem{theorem}{Theorem}[section]

\newtheorem{lemma}[theorem]{Lemma}

\author{Yazeed Alrubyli
    \\Department of Mathematics
    \\Politecnico di Milano
    \\Via Edoardo Bonardi, 9
    \\20133 Milano MI, Italy
    \\\href{mailto:yazeednaif.alrubyli@mail.polimi.it}{\texttt{yazeednaif.alrubyli@mail.polimi.it}}}
\date{\today}
\title{Natural Deduction Calculus\\for First-Order Logic}
\begin{document}
    \maketitle
    \paragraph{Abstract.}

    The purpose of this paper is to give an easy to understand with step-by-step explanation to allow interested people to fully appreciate the power of natural deduction for first-order logic. Natural deduction as a proof system can be used to prove various statements in propositional logic, but we will see its extension to cover quantifiers which gives it more power over propositional logic in solving more complex, real-world problems. We started by going over logical connectives and quantifiers to agree on the symbols that will be used throughout the paper, as some authors use different symbols to refer to the same thing. Besides, we showed the inference rules that are used the most. Furthermore, we presented the soundness and completeness of natural deduction for first-order logic. Finally, we solved examples ranging from easy to complex to give you different circumstances in which you can apply the proof system to solve problems you may encounter. Hopefully, this paper will be helpful makes the subject easy to understand.

    \medskip

    \noindent\textit{Keywords}: First-order logic, Predicate logic, Natural deduction
    
    \newpage
    \tableofcontents


    \newpage
    \section{Introduction}
    First-order logic (a.k.a predict logic) is extending the 
    propositional logic power to give it more power to solve 
    an advanced level of problems that can not be solved with 
    propositional logic. To stay consistent throughout this 
    paper, \textit{predict} as a name will not be used, instead,
    \textit{first-order} is will be used instead \cite{rautenberg2006concise}. 
    By demonstrating with an example the power of first-order 
    logic, let give an example. stating with a proposition P 
    \textit{"Every lion drinks coffee"}, and proposition Q 
    \textit{"Cat is a lion"}, therefore, proposition 
    R \textit{"Cat drinks coffee"}. Using propositional logic, 
    you can not derive the conclusion R from premises P and Q,
    but with first-order logic, you can. From your brain's 
    logical point of view, you can conclude R from P and Q 
    easily as following.

    \medskip

    \begin{fitch}
        P & Every lion drinks milk\\
        Q & Cat is a lion\\
        \therefore R & Therefore, Cat drinks milk 
    \end{fitch}

    \medskip

    \noindent As far as propositional logic is concerned it is hard to 
    arrive at such a conclusion. Propositional logic did not 
    give you the tools to solve such a problem, yet, your 
    brain can logically solve it, also first-order logic can 
    too \cite{russel2013artificial}. Quantifiers from 
    first-order logic such as $\forall$ or $\exists$, i.e.\ 
    universal and existential quantifiers, respectively can 
    help solve such a problem. Let us write the predicates 
    in short notation:

    \begin{itemize}
        \item Lion(x) = x is a lion
        \item Milk(x) = x drinks milk
    \end{itemize}

    \noindent\textbf{Note.} \textit{x} is the subject, \{\textit{is a lion, drinks milk}\}
    are the predicates. Also, the domain of dicourse will 
    be all the animles.

    \medskip

    \begin{fitch}
        \forall{x}(Lion(x)\rightarrow Milk(x)) & premise\\
        Lion(Cat) & premise\\
        Lion(Cat) \rightarrow Milk(Cat) & $\forall$E, 1\\ 
        \therefore Milk(Cat) & $\rightarrow$E, 3, 2
    \end{fitch}

    \medskip
    
    \noindent We discussed ($\forall$E) and ($\rightarrow$E) in the 
    next section \nameref{section:concepts} beside connectives, quantifiers, 
    and other inference rules that will be used to solve such a problem
    using natural deduction for first-order logic.

    \section{Basic Concepts}
    \label{section:concepts}
    Different authors used different symbols, in order to eliminate confuion and stay consistent,
    we will be using the symbols that is widely used in mathematical logic books. Connectives and quantifiers
    are showen in table \ref{tbl:connectives} and table \ref{tbl:quantifiers}, respectivaly.
    \subsection{Logical Connectives}
    The following table is sorted based on the precedence of each connective, starting from the higher
    priorty.
    \def\arraystretch{1.5}

    \begin{table}[h!]
        \centering
        \begin{tabularx}{350pt}{|c|c|X|}
            \cline{1-3}
            Symbol                 & Connective     & Description \\
            \hline
            \hline
            $\top$                 & Truth          & True.                                                                            \\ \cline{1-3}
            $\bot$                 & Falsehood      & False.                                                                           \\ \cline{1-3}
            $\lnot$                & Negation       & $\lnot$P is true if and only if P is false.                                      \\ \cline{1-3}
            $\wedge$               & Conjunction    & P$\wedge$Q is true if and only if both P and Q are true.                         \\ \cline{1-3}
            $\vee$                 & Disjunction    & P$\vee$Q is true if and only if either P or Q is true.                           \\ \cline{1-3}
            $\rightarrow$             & Conditional    & P$\rightarrow$Q is true if and only if either P is false or Q is true (or both).    \\ \cline{1-3}
            $\leftrightarrow $  & Biconditional  & P$\leftrightarrow $Q is true if and only if P and Q have the same truth value.\\ \cline{1-3}
        \end{tabularx}
        \caption{Logiacl Connectives \cite{barker2011language}}
        \label{tbl:connectives}
    \end{table}

    \subsection{Quantifiers}
    \begin{table}[h!]
        \centering
        \begin{tabularx}{350pt}{|c|c|X|}
            \cline{1-3}
            Symbol      & Quantifier    & Description \\
            \hline
            \hline
            $\forall$   & Universal     & $\forall{x}$P, P is true for every object x. \\ \cline{1-3}
            $\exists$   & Existential   & $\exists{x}$P, P is true for at least one object x. \\ \cline{1-3}
        \end{tabularx}
        \caption{Quantifiers \cite{russel2013artificial}}
        \label{tbl:quantifiers}
    \end{table}

    \subsection{Inference Rules}
    \label{section:IR}
    There are many inference rules such as basic, derived, and others. This
    paper covers the basic rules of inference that can be used 
    to derive a proof which is a chain of conclusions that leads
    to the desired outcome \cite{russel2013artificial}.

    \subsubsection{Negation}
    \textit{Negation Introduction} ($\lnot$I) also known as \textit{reductio ad absurdum} is to derive a negation of a sentence
    if the sentence leads to contradiction, As shown on the left, $\bot$ could be Q $\land\lnot$Q
    from assumption P.
    \textit{Negation Elimination} ($\lnot$E) is to reomve the double negation
    \cite{monin2009gentzen}.
    
    \setlength{\columnsep}{-40pt}
    \begin{multicols}{2}
        \begin{center}
            \begin{tabular}{l}
                \textbf{Negation Introduction}\\
                $\phi$\\
                $\bot$\\
                \Xhline{3\arrayrulewidth}
                $\lnot \phi$ 
            \end{tabular}
        \end{center}
        \columnbreak
        \begin{center}
            \begin{tabular}{l}
                \textbf{Negation Elimination}\\
                $\lnot\lnot\phi$\\
                \Xhline{3\arrayrulewidth}
                $\phi$ 
            \end{tabular}
        \end{center}
    \end{multicols}

    \subsubsection{Conjunction}
    \textit{Conjunction Introduction} ($\land$I) is to derive
    a conjunction from its conjuncts, i.e.\ if A is true, and B is true
    , then A$\land$B must be true. 
    \textit{Conjunction Elimination} ($\land$E) is to reomve 
    the conjunction and pick one of its conjuncts, i.e.\ 
    if A$\land$B is true, then A must be true, and B must 
    be true \cite{pfenning2001judgmental}.

    \setlength{\columnsep}{-40pt}
    \begin{multicols}{2}
        \begin{center}
            \begin{tabular}{c}
                \textbf{Conjunction Introduction}\\
                $\phi_1$\\
                $\vdots$\\
                $\phi_n$\\
                \Xhline{3\arrayrulewidth}
                $\phi_1\land...\land\phi_n$
            \end{tabular}
        \end{center}
        \columnbreak
        \begin{center}
            \begin{tabular}{l}
                \textbf{Conjunction Elimination}\\
                $\phi_1\land...\land\phi_n$\\
                \Xhline{3\arrayrulewidth}
                $\phi_i$
            \end{tabular}
        \end{center}
    \end{multicols}

    \pagebreak

    \subsubsection{Disjunction}
    \textit{Disjunction Introduction} ($\lor$I) is to add as many
    disjunct as you prefer if at least one of the disjuncts 
    is in the proof. \textit{Disjunction 
    Elimination} ($\lor$E) is to reomve conjunction and
    pick one of the true sentences \cite{aschieri2016natural}.

    \setlength{\columnsep}{-40pt}
    \begin{multicols}{2}
        \begin{center}
            \begin{tabular}{l}
                \textbf{Disjunction Introduction}\\
                $\phi_i$\\
                \Xhline{3\arrayrulewidth}
                $\phi_1\lor...\lor\phi_n$
            \end{tabular}
        \end{center}
        \columnbreak
        \begin{center}
            \begin{tabular}{c}
                \textbf{Disjunction Elimination}\\
                $\phi_1\lor...\lor\phi_n$\\
                $\phi_1\rightarrow\psi$\\
                $\vdots$\\
                $\phi_n\rightarrow\psi$\\
                \Xhline{3\arrayrulewidth}
                $\psi$
            \end{tabular}
        \end{center}
    \end{multicols}

    \subsubsection{Conditional}
    \textit{Conditional Introduction} ($\rightarrow$I) is to use subproof assuming $\phi$
    and prove $\psi$, then it follows that $\phi\rightarrow\psi$.
    \textit{Conditional Elimination} ($\rightarrow$E) also known as \textit{Modus Ponens}
    is to conclude $\psi$ if $\phi\rightarrow\psi$ and $\phi$ are proven \cite{monin2009gentzen}.

    \setlength{\columnsep}{-40pt}
    \begin{multicols}{2}
        \begin{center}
            \begin{tabular}{l}
                \textbf{Conditional Introduction}\\
                $\phi\vdash\psi$\\
                \Xhline{3\arrayrulewidth}
                $\phi\rightarrow\psi$
            \end{tabular}
        \end{center}
        \columnbreak
        \begin{center}
            \begin{tabular}{l}
                \textbf{Conditional Elimination}\\
                $\phi\rightarrow\psi$\\
                $\phi$\\
                \Xhline{3\arrayrulewidth}
                $\psi$
            \end{tabular}
        \end{center}
    \end{multicols}

    \subsubsection{Biconditional}
    \textit{Biconditional Introduction} ($\leftrightarrow $I) is to use subproof
    by assuming $\phi$ and prove $\psi$, also assuming $\psi$ and prove $\phi$, then it follows
    that $\phi\leftrightarrow\psi$ \cite{barker2011language}. 
    \textit{Biconditional Elimination} ($\leftrightarrow $E) is to replace biconditional by
    $\phi\rightarrow\psi$ or $\psi\rightarrow\phi$ as both of them 
    are true by the definition of biconditional \cite{russel2013artificial}.

    \setlength{\columnsep}{-40pt}
    \begin{multicols}{2}
        \begin{center}
            \begin{tabular}{l}
                \textbf{Biconditional Introduction}\\
                $\phi\rightarrow\psi$\\
                $\psi\rightarrow\phi$\\
                \Xhline{3\arrayrulewidth}
                $\phi\leftrightarrow \psi$
            \end{tabular}
        \end{center}
        \columnbreak
        \begin{center}
            \begin{tabular}{l}
                \textbf{Biconditional Elimination}\\
                $\phi\leftrightarrow \psi$\\
                \Xhline{3\arrayrulewidth}
                $\phi\rightarrow\psi$\\
                $\psi\rightarrow\phi$
            \end{tabular}
        \end{center}
    \end{multicols}

    \pagebreak

    \subsubsection{Universal}
    \textit{Universal Introduction} ($\forall$I) also known as \textit{Universal Generalization}
    if arbtray x has a property $\phi$ i.e $\phi(x)$, then we can conclude that for all x, such that x has propery $\phi$ $\forall{x}\phi(x)$. 
    Variable x should not be free in any hypothesis on which $\phi(x)$ depends.
    \textit{Universal Elimination} ($\forall$E) also known as \textit{Universal Instantiation} 
    if all x in the universe has a property $\phi(x)$, $\forall{x}\phi(x)$, then there must be a \textit{t} in that universe
    that has the property $\phi$, $\phi(t)$, t must be free for x \cite{van2004logic}.

    \setlength{\columnsep}{-40pt}
    \begin{multicols}{2}
        \begin{center}
            \begin{tabular}{c}
                \textbf{Universal Introduction}\\
                $\phi(x)$\\
                \Xhline{3\arrayrulewidth}
                $\forall{x}\phi(x)$
            \end{tabular}
        \end{center}
        \columnbreak
        \begin{center}
            \begin{tabular}{l}
                \textbf{Universal Elimination}\\
                $\forall{x}\phi(x)$\\
                \Xhline{3\arrayrulewidth}
                $\phi(t)$
            \end{tabular}
        \end{center}
    \end{multicols}

    \subsubsection{Existential}
    \textit{Existential Introduction} ($\exists$I) also known as \textit{Existential Generalization}
    if an object c has property P, then there must exist x in a universe that has property P.
    \textit{Existential Elimination} ($\exists$E) also known as \textit{Existential Instantiation}
    if exist in a universe that object x has property P, then there must be an object c that
    has property P \cite{martin1996meanings}.

    \setlength{\columnsep}{-40pt}
    \begin{multicols}{2}
        \begin{center}
            \begin{tabular}{l}
                \textbf{Existential Introduction}\\
                $\phi[t/x]$\\
                \Xhline{3\arrayrulewidth}
                $\exists{x}\phi$
            \end{tabular}
        \end{center}
        \columnbreak
        \begin{center}
            \begin{tabular}{l}
                \textbf{Existential Elimination}\\
                $\exists{x}\phi$\\
                \Xhline{3\arrayrulewidth}
                P(c)
            \end{tabular}
        \end{center}
    \end{multicols}

    \subsubsection{Reiteration}
    As the proof gets complicated and long, iteration rule can be used to bring an earlier step within the proof or to bring it to the subproof. It works as a reminder that "we have already shown that P" \cite{barker2011language}.

    \begin{center}
        \textbf{Reiteration (Re)}\\
        \medskip
        \begin{fitch}
            \fa P & Premise\\
            \fa\fh Q & Assumption\\
            \fa\fa\vdots\\
            \fa\fa P & Re, 1
        \end{fitch}
    \end{center}
    

    \pagebreak

    \section {Natural Deduction Calculus}
    \textit{Natural Deduction} was first introduced as a term by the German logician Gentzen, Gerhard.
    It was introduced as a formalism that mimics how humans naturally reason, hence the name. By applying inference 
    rules, one can infer conclusions from the premises. In other words, it is a method
    for showing that the logical reasoning (premises logically entails conclusion) 
    is valid \cite{ly2017proof}.
    
    \medskip
    
    \textit{Calculus}, which means the way of calculating or reasoning \cite{weiner1993oxford}, in our case, \textit{natural deduction calculus} is calculating the truth values of an argument by means of \textit{natural deduction} proof system. Proofs in natural deduction will be as follows \cite{ly2017proof}:
    \begin{enumerate}
        \item \label{itm:first} Start with zero or more premises.
        \item Prove formula (e.g. P$\land$Q) is provable from \ref{itm:first}.
        \item Justify each formula by using rules of inference or other proper justification. 
    \end{enumerate}

    \medskip

    \noindent In order to use any proof system, we have to make sure it is reliable and correct. Reliability and correctness of a proof system are shown through its \textit{soundenss} and \textit{completeness}.

    \subsection{Soundness}
    The \textit{Soundness Theorem} for a proof system will assure that we can only construct proofs of valid arguments. That is, we want to prove that every sentence in a proof is entailed by the previous sentences \cite{barker2011language}.

    \begin{theorem}[Soundness]\label{thm:soundenss}
        Let $\varphi$ be any formula and $\varGamma$ a set of formulas in first-order language L. If $\varGamma\vdash\varphi$, then $\varGamma\models\varphi$ \cite{goldrei2005propositional}.
        \begin{equation*}
            \varGamma \vdash \varphi \to \varGamma \models \varphi
        \end{equation*}
    \end{theorem}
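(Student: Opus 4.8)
The statement asserts that syntactic derivability implies semantic entailment, so the natural strategy is to show that every rule of inference preserves the entailment relation $\models$. That is, whenever the premises of a rule are entailed by $\Gamma$, so is the conclusion. Since a derivation is a finite chain of rule applications, induction on its length lets us propagate this property from the leaves (premises and assumptions) down to the final formula $\varphi$.

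For the *base case*, suppose the derivation has length one. Then $\varphi$ is either a member of $\Gamma$, in which case $\Gamma \models \varphi$ holds trivially because any model satisfying all of $\Gamma$ satisfies $\varphi$; or $\varphi$ is introduced as an assumption, which will later be discharged and handled under the conditional case. For the *inductive step*, I would assume the claim holds for all shorter derivations and then verify it rule by rule for each of the inference rules presented in Section~\ref{section:IR}. For the propositional rules this is a routine semantic check: for instance, $(\land\text{I})$ follows because a model satisfying both $\phi_1,\dots,\phi_n$ satisfies their conjunction; $(\rightarrow\text{E})$ follows from the truth table of the conditional; and $(\rightarrow\text{I})$ requires reasoning about the discharge of an assumption, showing that if $\Gamma, \phi \models \psi$ then $\Gamma \models \phi \rightarrow \psi$.

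The cases that demand genuine care are the quantifier rules, since these depend on the side conditions concerning free variables and substitution. For $(\forall\text{I})$, soundness relies essentially on the eigenvariable restriction: the variable $x$ must not occur free in any undischarged hypothesis on which $\phi(x)$ depends. I would show that under this restriction, if $\Gamma \models \phi(x)$ with $x$ arbitrary relative to $\Gamma$, then $\Gamma \models \forall x\, \phi(x)$, appealing to the semantics of the universal quantifier over all variable assignments. Dually, $(\exists\text{E})$ requires the eliminated witness to be genuinely fresh, and I would verify that the entailment is preserved when passing from $\exists x\, \phi$ to a conclusion not mentioning the witness. The rules $(\forall\text{E})$ and $(\exists\text{I})$ hinge on the substitution condition that the term $t$ be free for $x$ in $\phi$, and their soundness follows from a substitution lemma relating the truth value of $\phi[t/x]$ to the value of $\phi$ under the assignment sending $x$ to the denotation of $t$.

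\textbf{The main obstacle} will be handling the variable conditions in the quantifier rules rigorously, which in turn forces us to make the semantics precise: we need a notion of satisfaction relative to a variable assignment, and a substitution lemma of the form $\mathcal{M} \models \phi[t/x]$ iff $\mathcal{M}$ with $x$ reassigned to the value of $t$ satisfies $\phi$. Without this lemma the quantifier cases cannot be discharged cleanly, so I would state and prove it (by induction on the structure of $\phi$) as a preliminary step before carrying out the main induction. The propositional cases, by contrast, are mechanical and can be summarized briefly once the inductive framework is in place.
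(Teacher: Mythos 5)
Your proposal follows essentially the same route as the paper: induction on the length of the derivation, with a base case for a formula standing on its own and an inductive step verifying, rule by rule, that each inference rule of Section \ref{section:IR} preserves semantic entailment. In fact, your plan is more careful than the paper's own sketch on precisely the delicate points --- the eigenvariable condition for $\forall$I, the freshness of the witness in $\exists$E, and the substitution lemma underpinning $\forall$E and $\exists$I --- which the paper merely tabulates as boxed entailment claims without argument.
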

    
    \begin{proof}
        Using \textit{mathmatical induction}, we can prove the soundness of natural deduction for first-order logic. We start with the base case, i.e. the first step \textbf{n = 1}, if it holds, we do the induction step, where we assume it holds for step \textbf{n = k} and prove that it also holds for the next step \textbf{n = k + 1} \cite{van2004logic}.
        
        \paragraph{}
        \noindent\textbf{Base Case}
        \[\varGamma_1 \models \varphi_1\]
        \[\varphi_1 \models \varphi_1 \] 
        \[{\text{Indeed, any formula is model of itself.}}\]
        \[\therefore \varphi_1 \vdash \varphi_1 \rightarrow \varphi_1 \models \varphi_1\] 
        \paragraph{}
        \textbf{Inductive Step}
        \[Assume \ \varGamma_k \models \varphi_k\]
        \[Show \ \varGamma_{k+1} \models \varphi_{k+1}\]
        Because you are going to use the inference rules in any step of a proof, you need to prove that it will still be valid at any step, in our case, step k+1. Therefore, we will prove the soundness of the proof system case by case. Note that, $\Gamma_{k+1} \models \varphi_{k+1}$ means the valution of $\alpha$ at step k+1 is a model for all formulas in $\Gamma$ at step k+1 and the k previous formulas in $\Gamma$. Furthermore, you can use the truth table instead of the small boxes on the right as proof of what is presented on the left.

        \paragraph{Case 1:}
        Negation Introduction
        \begin{multicols}{2}
            \begin{center}
                \begin{tabular}{l|ll}
                    i   & \multicolumn{1}{||l}{} $\alpha$\\
                    j   & \multicolumn{1}{||r}{} $\bot$\\
                    k+1 & \multicolumn{2}{l}{} $\lnot\alpha$ \ \ \  $\lnot$I, i-j 
                \end{tabular}
            \end{center}
            \columnbreak
            \begin{center}
                \begin{tabular}{|l|}
                    \hline
                    $\Gamma_i \models \alpha$\\
                    $\Gamma_j \models \bot$\\
                    $\Gamma_{k+1} \models \lnot\alpha$ (= $\varphi_{k+1}$)\\
                    \hline
                \end{tabular}
            \end{center}
        \end{multicols}
        \noindent Note that from i to j is a subproof, starting with $\alpha$ as an assumption and reaching a contradiction by the end of the subproof.
        
        \paragraph{Case 2:}
        Negation Elimination
        \begin{multicols}{2}
            \begin{center}
                \begin{tabular}{ll}
                    \multicolumn{1}{l|}{i}   & $\lnot\lnot\alpha$ \\
                    \multicolumn{1}{l|}{k+1} & $\alpha$ \ \ \  $\lnot$E, i
                \end{tabular}
            \end{center}
            \columnbreak
            \begin{center}
                \begin{tabular}{|l|}
                    \hline
                    $\Gamma_i \models \lnot\lnot\alpha$\\
                    $\Gamma_{k+1} \models \alpha$ (= $\varphi_{k+1}$)\\
                    \hline
                \end{tabular}
            \end{center}
        \end{multicols}

        \paragraph{Case 3:}
        Conjunction Introduction
        \setlength{\columnsep}{-40pt}
        \begin{multicols}{2}
            \begin{center}
                \begin{tabular}{ll}
                    \multicolumn{1}{l|}{i}   & $\alpha$ \\
                    \multicolumn{1}{l|}{j}   & $\phi$ \\
                    \multicolumn{1}{l|}{k+1} & $\alpha\land\phi$ \ \ \  $\land$I, i, j
                \end{tabular}
            \end{center}
            \columnbreak
            \begin{center}
                \begin{tabular}{|l|}
                    \hline
                    $\Gamma_i \models \alpha$\\
                    $\Gamma_j \models \phi$\\
                    $\Gamma_{k+1} \models \alpha\land\phi$ (= $\varphi_{k+1}$)\\
                    \hline
                \end{tabular}
            \end{center}
        \end{multicols}

        \vspace{15 pt}

        \noindent\begin{minipage}{\textwidth}
            \paragraph{Case 4:}
            Conjunction Elimination
            \vspace{10 pt}
            \setlength{\columnsep}{-40pt}
            \begin{multicols}{2}
                \begin{center}
                    \begin{tabular}{ll}
                        \multicolumn{1}{l|}{i}   & $\alpha\land\phi$ \\
                        \multicolumn{1}{l|}{k+1} & $\alpha$ \ \ \ \ \ \ \  $\land$E, i\\
                        \hline\\
                        \hline
                    \end{tabular}
                    \begin{tabular}{ll}
                        \multicolumn{1}{l|}{i}   & $\alpha\land\phi$ \\
                        \multicolumn{1}{l|}{k+1} & $\phi$ \ \ \ \ \ \ \  $\land$E, i\\
                    \end{tabular}
                \end{center}
                \columnbreak
                \begin{center}
                    \begin{tabular}{|l|}
                        \hline
                        $\Gamma_i \models \alpha\land\phi$\\
                        $\Gamma_{k+1} \models \alpha$ (= $\varphi_{k+1}$)\\
                        \hline
                    \end{tabular}
                    \newline
                    \vspace*{15 pt}
                    \newline
                    \begin{tabular}{|l|}
                        \hline
                        $\Gamma_i \models \alpha\land\phi$\\
                        $\Gamma_{k+1} \models \phi$ (= $\varphi_{k+1}$)\\
                        \hline
                    \end{tabular}
                \end{center}
            \end{multicols}
        \end{minipage}

        \paragraph{Case 5:}
        Disjunction Introduction
        \setlength{\columnsep}{-40pt}
        \begin{multicols}{2}
            \begin{center}
                \begin{tabular}{ll}
                    \multicolumn{1}{l|}{i}   & $\alpha$ \\
                    \multicolumn{1}{l|}{k+1} & $\alpha\lor\phi$ \ \ \  $\lor$I, i
                \end{tabular}
            \end{center}
            \columnbreak
            \begin{center}
                \begin{tabular}{|l|}
                    \hline
                    $\Gamma_i \models \alpha$\\
                    $\Gamma_{k+1} \models \alpha\lor\phi$ (= $\varphi_{k+1}$)\\
                    \hline
                \end{tabular}
            \end{center}
        \end{multicols}

        \vspace{15 pt}

        \noindent\begin{minipage}{\textwidth}
            \paragraph{Case 6:}
            Disjunction Elimination
            \vspace{10 pt}
            \setlength{\columnsep}{-40pt}
            \begin{multicols}{2}
                \begin{center}
                    \begin{tabular}{ll}
                        \multicolumn{1}{l|}{i}   & $\alpha\lor\phi$ \\
                        \multicolumn{1}{l|}{j}   & $\alpha\to\psi$ \\
                        \multicolumn{1}{l|}{m}   & $\phi\to\psi$ \\
                        \multicolumn{1}{l|}{k+1} & $\psi$ \ \ \  $\lor$E, i, j, m
                    \end{tabular}
                \end{center}
                \columnbreak
                \begin{center}
                    \begin{tabular}{|l|}
                        \hline
                        $\Gamma_i \models \alpha\lor\phi$\\
                        $\Gamma_j \models \alpha\to\psi$\\
                        $\Gamma_m \models \phi\to\psi$\\
                        $\Gamma_{k+1} \models \psi$ (= $\varphi_{k+1}$)\\
                        \hline
                    \end{tabular}
                \end{center}
            \end{multicols}
        \end{minipage}

        \vspace{20 pt}

        \noindent\begin{minipage}{\textwidth}
            \paragraph{Case 7:}
            Conditional Introduction
            \vspace{10 pt}
            \setlength{\columnsep}{-40pt}
            \begin{multicols}{2}
                \begin{center}
                    \begin{tabular}{ll}
                        \multicolumn{1}{l|}{i}   & $\alpha$ \\
                        \multicolumn{1}{l|}{j}   & $\phi$ \\
                        \multicolumn{1}{l|}{k+1} & $\alpha\to\phi$ \ \ \  $\to$I, i, j
                    \end{tabular}
                \end{center}
                \columnbreak
                \begin{center}
                    \begin{tabular}{|l|}
                        \hline
                        $\Gamma_i \models \alpha$\\
                        $\Gamma_j \models \phi$\\
                        $\Gamma_{k+1} \models \alpha\to\phi$ (= $\varphi_{k+1}$)\\
                        \hline
                    \end{tabular}
                \end{center}
            \end{multicols}
        \end{minipage}

        \paragraph{Case 8:}
        Conditional Elimination
        \setlength{\columnsep}{-40pt}
        \begin{multicols}{2}
            \begin{center}
                \begin{tabular}{ll}
                    \multicolumn{1}{l|}{i}   & $\alpha\to\phi$ \\
                    \multicolumn{1}{l|}{j}   & $\alpha$ \\
                    \multicolumn{1}{l|}{k+1} & $\phi$ \ \ \ \ \ \ \   $\to$E, i, j
                \end{tabular}
            \end{center}
            \columnbreak
            \begin{center}
                \begin{tabular}{|l|}
                    \hline
                    $\Gamma_i \models \alpha\to\phi$\\
                    $\Gamma_j \models \alpha$\\
                    $\Gamma_{k+1} \models \phi$ (= $\varphi_{k+1}$)\\
                    \hline
                \end{tabular}
            \end{center}
        \end{multicols}

        \vspace{20 pt}

        \noindent\begin{minipage}{\textwidth}
            \paragraph{Case 9:}
            Biconditional Introduction
            \vspace{10 pt}
            \setlength{\columnsep}{-40pt}
            \begin{multicols}{2}
                \begin{center}
                    \begin{tabular}{ll}
                        \multicolumn{1}{l|}{i}   & $\alpha\to\phi$ \\
                        \multicolumn{1}{l|}{k+1} & $\alpha\leftrightarrow\phi$ \ \ \ \ \  $\leftrightarrow$I, i, j
                    \end{tabular}
                \end{center}
                \columnbreak
                \begin{center}
                    \begin{tabular}{|l|}
                        \hline
                        $\Gamma_i \models \alpha\to\phi$\\
                        $\Gamma_j \models \phi\to\alpha$\\
                        $\Gamma_{k+1} \models \alpha\leftrightarrow\phi$ (= $\varphi_{k+1}$)\\
                        \hline
                    \end{tabular}
                \end{center}
            \end{multicols}
        \end{minipage}

        \vspace{20 pt}
        
        \noindent\begin{minipage}{\textwidth}
            \paragraph{Case 10:}
            Biconditional Elimination
            \vspace{10 pt}
            \setlength{\columnsep}{-40pt}
            \begin{multicols}{2}
                \begin{center}
                    \begin{tabular}{ll}
                        \multicolumn{1}{l|}{i}   & $\alpha\leftrightarrow\phi$ \\
                        \multicolumn{1}{l|}{j}   & $\alpha$ \\
                        \multicolumn{1}{l|}{k+1} & $\phi$ \ \ \ \ \ \ \  $\leftrightarrow$E, i, j\\
                        \hline\\
                        \hline
                    \end{tabular}
                    \begin{tabular}{ll}
                        \multicolumn{1}{l|}{i}   & $\alpha\leftrightarrow\phi$ \\
                        \multicolumn{1}{l|}{j}   & $\phi$ \\
                        \multicolumn{1}{l|}{k+1} & $\alpha$ \ \ \ \ \ \ \  $\leftrightarrow$E, i, j\\
                    \end{tabular}
                \end{center}
                \columnbreak
                \begin{center}
                    \begin{tabular}{|l|}
                        \hline
                        $\Gamma_i \models \alpha\leftrightarrow\phi$\\
                        $\Gamma_j \models \alpha$\\
                        $\Gamma_{k+1} \models \phi$ (= $\varphi_{k+1}$)\\
                        \hline
                    \end{tabular}
                    \newline
                    \vspace*{15 pt}
                    \newline
                    \begin{tabular}{|l|}
                        \hline
                        $\Gamma_i \models \alpha\leftrightarrow\phi$\\
                        $\Gamma_j \models \phi$\\
                        $\Gamma_{k+1} \models \alpha$ (= $\varphi_{k+1}$)\\
                        \hline
                    \end{tabular}
                \end{center}
            \end{multicols}
        \end{minipage}
        
        \paragraph{Case 11:}
        Universal Introduction
        \begin{multicols}{2}
            \begin{center}
                \begin{tabular}{l|ll}
                    i   & \multicolumn{1}{||l}{} $c$\\
                    j   & \multicolumn{1}{||r}{} $\phi(c)$\\
                    k+1 & \multicolumn{2}{l}{} $\forall{x}\phi(x)$ \ \ \  $\forall$I, i-j
                \end{tabular}
            \end{center}
            \columnbreak
            \begin{center}
                \begin{tabular}{|l|}
                    \hline
                    $\Gamma_i \models c$\\
                    $\Gamma_j \models \phi(c)$\\
                    $\Gamma_{k+1} \models \forall{x}\phi(x)$ (= $\varphi_{k+1}$)\\
                    \hline
                \end{tabular}
            \end{center}
        \end{multicols}
        
        \noindent Note that c is an arbitrary object from the domain of discourse that must be introduced as a new constant in a subproof, then prove that c has a property $\phi$, i.e. $\phi(c)$. $\phi(c)$ must not contain any constant introduced by existential elimination after we introduced the constant c \cite{barker2011language}.
        
        \paragraph{Case 12:}
        Universal Elimination
        \begin{multicols}{2}
            \begin{center}
                \begin{tabular}{ll}
                    \multicolumn{1}{l|}{i}   & $\forall{x}\phi(x)$ \\
                    \multicolumn{1}{l|}{k+1} & $\phi(c)$ \ \ \ \ \  $\forall$E, i
                \end{tabular}
            \end{center}
            \columnbreak
            \begin{center}
                \begin{tabular}{|l|}
                    \hline
                    $\Gamma_i \models \forall{x}\phi(x)$\\
                    $\Gamma_{k+1} \models \phi(c)$ (= $\varphi_{k+1}$)\\
                    \hline
                \end{tabular}
            \end{center}
        \end{multicols}

        \vspace{20 pt}
        
        \noindent\begin{minipage}{\textwidth}
            \paragraph{Case 13:}
            Existential Introduction
            \vspace{10 pt}
            \begin{multicols}{2}
                \begin{center}
                    \begin{tabular}{ll}
                        \multicolumn{1}{l|}{i}   & $\phi(c)$ \\
                        \multicolumn{1}{l|}{k+1} & $\exists{x}\phi(x)$ \ \ \ \ \  $\exists$E, i
                    \end{tabular}
                \end{center}
                \columnbreak
                \begin{center}
                    \begin{tabular}{|l|}
                        \hline
                        $\Gamma_i \models \phi(c)$\\
                        $\Gamma_{k+1} \models \exists{x}\phi(x)$ (= $\varphi_{k+1}$)\\
                        \hline
                    \end{tabular}
                \end{center}
            \end{multicols}
        \end{minipage}

        \vspace{20 pt}
        
        \noindent\begin{minipage}{\textwidth}
            \paragraph{Case 14:}
            Existential Elimination
            \vspace{10 pt}
            \begin{multicols}{2}
                \begin{center}
                    \begin{tabular}{ll}
                        \multicolumn{1}{l|}{i}   & $\exists{x}\phi(x)$ \\
                        \multicolumn{1}{l|}{k+1} & $\phi(c)$ \ \ \ \ \  $\exists$E, i
                    \end{tabular}
                \end{center}
                \columnbreak
                \begin{center}
                    \begin{tabular}{|l|}
                        \hline
                        $\Gamma_i \models \exists{x}\phi(x)$\\
                        $\Gamma_{k+1} \models \phi(c)$ (= $\varphi_{k+1}$)\\
                        \hline
                    \end{tabular}
                \end{center}
            \end{multicols}
            \noindent Note that c is an object that satisfies property $\phi$. Therefore, you may assume $\phi(c)$ \cite{barker2011language}.
        \end{minipage}
        
        \paragraph{Case 15:}
        Reiteration
        
        \setlength{\columnsep}{-40pt}
        \begin{multicols}{2}
            \begin{center}
                \begin{tabular}{ll}
                    \multicolumn{1}{l|}{i}   & $\alpha$ \\
                    \multicolumn{1}{l|}{k+1} & $\alpha$ \ \ \  Reit, i
                \end{tabular}
            \end{center}
            \columnbreak
            \begin{center}
                \begin{tabular}{|l|}
                    \hline
                    $\Gamma_i \models \alpha_h$\\
                    $\Gamma_{k+1} \models \alpha_{k+1}$\\
                    \hline
                \end{tabular}
            \end{center}
        \end{multicols}
        \noindent From the 15 cases that have been shown, we can conclude that the natural deduction for first-order logic is sound. \qedhere
    \end{proof}

    \subsection{Completeness}
    Gödel's completeness theorem states that a deduction system is said to be complete if every universally valid formula in the language L has a proof under the proof system, (natural deduction) in our case \cite{koepke2007godel}.

    \medskip

   Proving the completeness of a formal proof system is a huge and complex task, it was Gödel's doctoral dissertation that was finished in 1929 and published in 1930 \cite{baaz2011kurt}. Therefore, to stay consistent in the way we present the proof to fit with the overall presentation of the paper in which we aim to make various
   concepts as easy to understand as possible, we will give a sketch of the proof, but before doing so, we need to stop at a couple of definitions and lemmas on our way to reach the proof of completeness for the natural deduction for first-order logic \footnote{The procedure we followed to tackle the proof is taken from the book \textit{Logic and Structure by Dirk van Dalen} \cite{van2004logic}.}.

   \vspace{20 pt}
        
   \noindent\begin{minipage}{\textwidth}
        \begin{definition}
        (i) A theory T is a collection of sentences with property $T\vdash\varphi\to\varphi\in T$ (T is closed under derivability).\\
        (ii) A set $\Gamma$ such that T = $\{\varphi \ | \ \Gamma\vdash\varphi\}$ is called an axiom set of theory T.\\
        (iii) Theory T is called Henkin theory, if for each sentence $\exists{x}\varphi(x)$ there is a constant c such that $\exists{x}\varphi(x)\to\varphi(c)\in T$ (c is called a witness for $\exists{x}\varphi(x)$).
        \end{definition}
    \end{minipage}
    
    \vspace{2 pt}

    \begin{definition}
        Let T, T' be theories in language L, L'.\\
        (i) T' is an extension of T if $T\subseteq T'$.\\
        (ii) T' is a conservative extension of T if $T'\cap L=T$, i.e. all theorems of T' in the language L are already theorems of T.
    \end{definition}

    \begin{definition}
        Let a theory T be with language L. By adding a constant $c_\varphi$ for each sentence of the form $\exists{x}\varphi(x)$in language L, we obtain L*. T* is the theory with axiom set $T\cup\{\exists{x}\varphi(x)\to\varphi(c) \ | \ \exists{x}\varphi(x)$ closed, with witness $c_\varphi\}$
    \end{definition}

    \begin{lemma}
        Let language L have cardinality $\kappa$ . If $\Gamma$ is a consistent set of sentences, then $\Gamma$ has a model of cardinality $\leq \kappa$.

    \end{lemma}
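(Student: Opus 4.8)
The plan is to carry out the Henkin construction in three stages: first enlarge $\Gamma$ to a consistent theory that contains witnesses for every existential sentence, then enlarge it further to a maximally consistent theory, and finally read off a model directly from the syntactic data of that theory. Since $\Gamma$ is consistent, I would first close it under derivability to obtain a theory $T_0 = \{\varphi \mid \Gamma \vdash \varphi\}$ having $\Gamma$ as an axiom set. Using the $T^*$ construction of the preceding definition, I would iterate the addition of witness constants: set $L_0 = L$, take $T_0$ as above, and at each stage form $L_{n+1}$ by adjoining a fresh constant $c_\varphi$ for every sentence $\exists x\,\varphi(x)$ of $L_n$, letting $T_{n+1} = T_n^*$. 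The union $T_\omega = \bigcup_n T_n$ over the language $L_\omega = \bigcup_n L_n$ is then a Henkin theory. The key point to verify here is that each $T_n^*$ is a conservative extension of $T_n$, so that consistency is preserved at every stage and hence $T_\omega$ is consistent; this rests on the observation that any derivation of a contradiction in $T_n^*$ uses only finitely many witness axioms, each of which can be discharged by $\exists$-elimination because its witness constant does not occur in $T_n$.

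Next I would apply Lindenbaum's lemma (a Zorn's-lemma argument) to extend $T_\omega$ to a maximally consistent theory $T_m$ in the language $L_\omega$. I would check that maximal consistency yields the usual closure properties — for every sentence $\sigma$ exactly one of $\sigma, \lnot\sigma$ lies in $T_m$, a conjunction lies in $T_m$ iff both conjuncts do, and so on — and, crucially, that $T_m$ is still a Henkin theory, since it extends $T_\omega$ and therefore retains each witness axiom $\exists x\,\varphi(x) \to \varphi(c_\varphi)$.

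With $T_m$ in hand I would construct the term model $\mathcal{M}$: take the domain to be the set of closed terms of $L_\omega$, passing to equivalence classes under the relation $t \sim s \iff (t = s) \in T_m$ when equality is present, interpret each function and constant symbol syntactically, and declare an atomic relation $R(t_1,\dots,t_n)$ to hold in $\mathcal{M}$ exactly when $R(t_1,\dots,t_n) \in T_m$. The heart of the argument is the Truth Lemma, proved by induction on the structure of sentences: $\mathcal{M} \models \sigma$ if and only if $\sigma \in T_m$. The propositional connectives follow from the closure properties of $T_m$, the universal case follows because every element of the domain is named by a closed term, and the existential case is precisely where the Henkin property is indispensable — if $\exists x\,\varphi(x) \in T_m$, the witness axiom forces $\varphi(c_\varphi) \in T_m$, supplying the required element.

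Finally, since $\Gamma \subseteq T_m$, the Truth Lemma gives $\mathcal{M} \models \Gamma$, so $\Gamma$ has a model. For the cardinality bound I would count closed terms: at most $\kappa$ new constants are adjoined at each of the countably many stages, so $|L_\omega| = \kappa$ and the set of closed terms has cardinality $\kappa$, while passing to $\sim$-classes can only decrease this, yielding a model of cardinality $\le \kappa$. I expect the main obstacle to be the conservativity step for the witness extension — making precise that adjoining witnesses never introduces a contradiction — together with the quantifier cases of the Truth Lemma, since these are where the interplay between the syntactic construction and the semantic definition of satisfaction is genuinely delicate rather than routine.
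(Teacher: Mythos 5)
Your proposal is correct and follows essentially the same route the paper takes: the paper states this lemma without its own proof, but the machinery it develops immediately afterwards (the conservative witness extension $T^*$, the iterated union $T_\omega$ forming a Henkin theory, Lindenbaum's lemma via Zorn, and the canonical term model with its Truth Lemma in the Model Existence Lemma) is exactly the construction you describe, and you correctly note the needed side conditions (finiteness of derivations for consistency of $T_\omega$, preservation of the Henkin property under same-language extension). The only content you add beyond the paper is the cardinality count $|L_\omega| = \kappa \cdot \aleph_0 = \kappa$ bounding the closed-term quotient, which is the right (and in the paper, omitted) way to obtain the bound $\leq \kappa$, granting the standard convention that the cardinality of a language is at least $\aleph_0$.
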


    \begin{lemma}
        T* is conservative over T.
    \end{lemma}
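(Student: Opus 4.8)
The plan is to establish the nontrivial inclusion $T^{*}\cap L\subseteq T$; the reverse inclusion is immediate, since $T\subseteq T^{*}$ and every sentence of $T$ already lies in $L$. So I would suppose $\sigma$ is a sentence of $L$ with $T^{*}\vdash\sigma$, and aim to conclude $T\vdash\sigma$, whence $\sigma\in T$ because $T$ is closed under derivability.

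First I would invoke the finiteness of derivations: a proof of $\sigma$ from $T^{*}$ can mention only finitely many of the new Henkin axioms, say $\exists x\,\varphi_1(x)\to\varphi_1(c_1),\dots,\exists x\,\varphi_n(x)\to\varphi_n(c_n)$, where each $\varphi_i$ is an $L$-formula and the $c_i$ are distinct witnessing constants that do not occur in $L$. Repeated use of the Deduction Theorem (Case 7, $\to$I) then yields
\[
 T\vdash\big(\exists x\,\varphi_1(x)\to\varphi_1(c_1)\big)\to\cdots\to\big(\exists x\,\varphi_n(x)\to\varphi_n(c_n)\big)\to\sigma .
\]

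The heart of the argument is a single-constant elimination, to be applied $n$ times from the outside in. Suppose $T\vdash(\exists x\,\varphi(x)\to\varphi(c))\to\theta$, where $c$ occurs neither in $T$, nor in $\theta$, nor in $\varphi$. Since $c$ is absent from the premise set $T$, I may generalise on it: replacing $c$ throughout by a fresh variable $y$ gives $T\vdash\forall y\,\big[(\exists x\,\varphi(x)\to\varphi(y))\to\theta\big]$. Because $y$ is not free in $\theta$, this is logically equivalent to $T\vdash\big(\exists y\,(\exists x\,\varphi(x)\to\varphi(y))\big)\to\theta$. Now $\exists y\,(\exists x\,\varphi(x)\to\varphi(y))$ is a theorem of pure first-order logic — over a nonempty domain it reduces to $\exists x\,\varphi(x)\to\exists y\,\varphi(y)$, a validity — so I can discharge the antecedent by $\to$E and obtain $T\vdash\theta$.

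Taking $\theta$ to be the remaining conditional $(\exists x\,\varphi_2(x)\to\varphi_2(c_2))\to\cdots\to\sigma$, each application of this step strips off one Henkin axiom while keeping the hypothesis set $T$ fixed and free of the relevant constant; after $n$ steps I reach $T\vdash\sigma$. The legality of each step rests on the $c_i$ being genuinely new: since every $\varphi_j$ is an $L$-formula, no witnessing constant appears inside another axiom, so there is no interference and the eliminations may be carried out independently. I expect the main obstacle to be precisely this constant-elimination lemma — in particular justifying generalisation on a constant absent from the premises and verifying the validity of $\exists y\,(\exists x\,\varphi(x)\to\varphi(y))$ — rather than the bookkeeping of the induction on $n$.
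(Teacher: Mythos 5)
Your proposal is correct and takes essentially the same route as the paper's proof (which follows van Dalen): both internalize the finitely many Henkin axioms via the deduction theorem, then eliminate each witnessing constant by replacing it with a fresh variable $y$, generalizing (legitimate since $c$ is absent from the premises), shifting the quantifier across the implication because $y$ is not free in the consequent, and discharging the antecedent using the validity $\exists x\,\varphi(x)\to\exists y\,\varphi(y)$. Your explicit bookkeeping about the distinctness of the $c_i$ and their absence from the other axioms is a slightly cleaner rendering of the paper's induction in part (b), but the underlying argument is identical.
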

    \begin{proof}
        \begin{enumerate}[label=(\alph*)]
            \item \label{itm:a} Let $\exists{x}\alpha(x)\to\alpha(c)$ be one of the new axioms.\\
        Suppose set of sentences $\Gamma,\exists{x}\alpha(x)\to\alpha(c)\vdash\psi$, where the constant \textit{c} is neither in $\Gamma$ nor in $\psi$. We will show that $\Gamma\vdash\psi$:
            \begin{enumerate}[label=(\arabic*)]
                \item $\Gamma\vdash(\exists{x}\alpha(x)\to\alpha(c))\to\psi$.
                \item $\Gamma\vdash(\exists{x}\alpha(x)\to\alpha(y))\to\psi$. Note that \textit{y} is a varibale that does not occure in the associated derivation. 2 follows from 1, it harmless to replace \textit{c} by \textit{y}, the derivation remains intact).
                \item $\Gamma\vdash\forall{y}[(\exists{x}\alpha(x)\to\alpha(y))\to\psi]$. Since \textit{c} does not occure in $\Gamma$, the application of $\forall$ is valid.
                \item $\Gamma\vdash\exists{y}(\exists{x}\alpha(x)\to\alpha(y))\to\psi$.
                \item \label{itm:f} $\Gamma\vdash(\exists{x}\alpha(x)\to\exists{y}\alpha(y))\to\psi$.
                \item \label{itm:s} $\vdash\exists{x}\alpha(x)\to\exists{y}\alpha(y)$.
                \item From \ref{itm:f}, \ref{itm:s}, $\Gamma\vdash\psi$.
            \end{enumerate}
            \item Let \textit{T*} $\vdash\psi$, we know that $T \cup \{\delta_1,...,\delta_n\}\vdash\psi$ from derivability's definition, where $\delta_i$ is the new axiom of the form $\exists{x}\alpha(x)\to\alpha(c)$. We will prove $T\vdash\psi$ by induction. For the base case, where \textit{n = 0}, is done by \ref{itm:a}. For inductive step, let $T \cup \{\delta_1,...,\delta_n\}\vdash\psi$. Set $T' = T \cup \{\delta_1,...,\delta_n\}\vdash\psi$, then $T', \delta_{n+1}\vdash\psi$. By induction hypothesis, $T\vdash\psi$.{\qedhere}
        \end{enumerate}
    \end{proof}

    \begin{lemma}
        Define $T_0 := T, T_{n+1} := (T_n)$* $T_\omega := \cup \{T_n \ | \ n \geq  0 \}$. Then $T_\omega$ is a Henkin theory and it is conservative over T.
    \end{lemma}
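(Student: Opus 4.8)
The plan is to prove the two assertions separately, leaning on the preceding lemma that $(T_n)^*$ is conservative over $T_n$ and on two finiteness observations: every sentence mentions only finitely many nonlogical symbols, and every derivation uses only finitely many axioms. First I would record that the construction yields increasing chains $L_0\subseteq L_1\subseteq\cdots$ of languages and $T_0\subseteq T_1\subseteq\cdots$ of theories, so that $T_\omega$ is a theory in $L_\omega=\bigcup_n L_n$; the inclusion $T_n\subseteq T_{n+1}$ holds because, by Definition 3.3, the axiom set of $T_{n+1}=(T_n)^*$ contains that of $T_n$, whence everything derivable in $T_n$ is derivable in $T_{n+1}$.

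For the Henkin property, take any sentence $\exists{x}\,\varphi(x)$ of $L_\omega$. Being a finite string it mentions only finitely many constants, each introduced at some finite stage, so $\exists{x}\,\varphi(x)$ already belongs to $L_n$ for some $n$. By the construction of $T_{n+1}=(T_n)^*$ a witness constant $c_\varphi$ is adjoined together with the axiom $\exists{x}\,\varphi(x)\to\varphi(c_\varphi)$, and this axiom lies in $T_{n+1}\subseteq T_\omega$. Since $\exists{x}\,\varphi(x)$ was arbitrary, every existential sentence of $L_\omega$ has a witness in $T_\omega$, so $T_\omega$ is a Henkin theory.

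For conservativity I would first observe that conservativity is transitive along the chain: if $\psi\in L$ and $T_{n+1}\vdash\psi$, then since $\psi\in L\subseteq L_n$ the previous lemma gives $T_n\vdash\psi$, and a short induction on $n$ (with base $T_0=T$) yields $T_n\cap L=T$ for every $n$. Now suppose $T_\omega\vdash\psi$ for a sentence $\psi$ of $L$. The derivation is finite, hence invokes only finitely many axioms from $T_\omega=\bigcup_n T_n$; because the $T_n$ form an increasing chain, all of these axioms already lie in a single $T_n$, so $T_n\vdash\psi$. As $\psi\in L$ and $T_n$ is conservative over $T$, we conclude $T\vdash\psi$. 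The reverse inclusion $T\subseteq T_\omega\cap L$ is immediate, giving $T_\omega\cap L=T$.

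The main obstacle is the finiteness passage in the conservativity argument: one must justify that a derivation from the infinite union $T_\omega$ can be replayed inside a single finite stage $T_n$, which is exactly where the increasing-chain structure and the finiteness of derivations are indispensable, and where the transitivity of conservativity is then applied to descend from $T_n$ back to $T$.
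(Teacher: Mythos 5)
Your proposal is correct and follows essentially the same route as the paper's proof (which in turn follows van Dalen): the witness axiom for any existential sentence of $L_\omega$ is adjoined at some finite stage $T_{n+1}$, and a derivation from $T_\omega$ uses only finitely many axioms that, by the increasing-chain structure, lie in a single $T_n$, after which the conservativity of each $T_n$ over $T$ (proved by induction from the preceding lemma) descends the conclusion to $T$. The one point the paper verifies explicitly that you merely assert is that $T_\omega$ is closed under derivability --- part of being a Henkin theory under Definition 3.1 --- but this follows from exactly the finiteness-of-derivations argument you already deploy for conservativity: if $T_\omega\vdash\delta$ then $T_m\vdash\delta$ for some $m$, and $\delta\in T_m\subseteq T_\omega$ since $T_m$ is a theory.
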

    \begin{proof}
        Call $L_n$ the language of $T_n$ and $L_\omega$ the language of $T_\omega$.
        \begin{enumerate}[label=(\roman*)]
            \item $T_n$ is conservative over T.
            \item $T_\omega$ is a theory. Suppose $T_\omega\vdash\delta$, then $\alpha_0,...,\alpha_n\in T_\omega$. For each $i \le n, \alpha_i\in T_{m_i}$ for some $m_i$. Let $m = max\{m_i|i \le n\}$. $T_{m_i}\subseteq T_m(i \le n)$ since for all k, $T_k \subseteq T_{k+1}$. Therefore, $T_m \vdash\delta$. $T_m$ is a theory by definition, so $\delta\in T_m\subseteq L_\omega$
            \item $T_\omega$ is a Henkin theory. Let $\exists{x}\alpha(x)\in L_\omega$, then $\exists{x}\alpha(x)\in L_n$. $\exists{x}\alpha(x)\to\alpha(c)\in L_{n+1}$ (by definition) for a certain c. So, $\exists{x}\alpha(x)\to\alpha(c)\in L_\omega$.
            \item $T_\omega$ is conservative over T. Note that $T_\omega\vdash\delta$ if $T_n\vdash\delta$ for some $n$.{\qedhere}
        \end{enumerate}
    \end{proof}

    \begin{lemma}[Lindenbaum]
        Each consistent theory is contained in a maximally consistent theory.
    \end{lemma}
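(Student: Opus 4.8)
The plan is to invoke Zorn's Lemma on the collection of consistent extensions of $T$, exploiting the fact that consistency is a \emph{finitary} property: a theory is inconsistent precisely when some finite subset of it derives $\bot$, since every derivation cites only finitely many premises. This is the same observation already used to show that $T_\omega$ is a theory in the previous lemma.

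First I would fix the partially ordered set
\[
\mathcal{P} = \{\, S \mid S \text{ is a consistent theory and } T \subseteq S \,\},
\]
ordered by inclusion $\subseteq$. It is nonempty, since $T \in \mathcal{P}$ by hypothesis. Next I would verify the hypothesis of Zorn's Lemma, namely that every chain $\mathcal{C} \subseteq \mathcal{P}$ has an upper bound in $\mathcal{P}$. The natural candidate is the union $U = \bigcup \mathcal{C}$, and I would check three things: that $U$ contains $T$ (immediate), that $U$ is closed under derivability, and that $U$ is consistent. For closure, if $U \vdash \psi$ then finitely many premises $\alpha_1,\dots,\alpha_n \in U$ already yield $\psi$; as $\mathcal{C}$ is a chain these all lie in a single $S \in \mathcal{C}$, so $S \vdash \psi$ and hence $\psi \in S \subseteq U$. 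The same finiteness argument gives consistency: were $U$ inconsistent we would have $U \vdash \bot$ from finitely many premises, which would already force some $S \in \mathcal{C}$ to be inconsistent, contradicting $S \in \mathcal{P}$.

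With the chain condition in place, Zorn's Lemma yields a maximal element $T_{\max} \in \mathcal{P}$, and it remains to argue that $T_{\max}$ is maximally consistent in the intended sense, i.e.\ that for every sentence $\varphi$ either $\varphi \in T_{\max}$ or $\lnot\varphi \in T_{\max}$. I would argue by contradiction: suppose $\varphi \notin T_{\max}$. Then $T_{\max} \cup \{\varphi\}$ properly extends $T_{\max}$, so by maximality within $\mathcal{P}$ it cannot be consistent; hence $T_{\max}, \varphi \vdash \bot$, and Conditional Introduction ($\to$I) gives $T_{\max} \vdash \lnot\varphi$. Since $T_{\max}$ is closed under derivability this means $\lnot\varphi \in T_{\max}$, as desired, and $\varphi, \lnot\varphi$ cannot both lie in $T_{\max}$ on pain of inconsistency.

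The main obstacle is not any single deep step but the careful verification of the chain condition, and in particular making sure the union of a chain of \emph{deductively closed} consistent theories is again deductively closed and consistent; both rest purely on the finiteness of derivations rather than on any semantic reasoning. A secondary point worth stating explicitly is the appeal to the Axiom of Choice through Zorn's Lemma: for a language of fixed cardinality $\kappa$ one could instead enumerate the sentences and build $T_{\max}$ by transfinite recursion, deciding at each stage whether to adjoin $\varphi$ or $\lnot\varphi$, but the Zorn formulation handles all cardinalities uniformly and keeps the argument short.
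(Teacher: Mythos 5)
Your proof follows essentially the same route as the paper: Zorn's Lemma applied to the collection of consistent extensions of $T$ partially ordered by inclusion, with the union of a chain serving as the upper bound. You additionally spell out details the paper leaves implicit --- the finiteness-of-derivations argument showing the union of a chain is consistent and deductively closed, and the verification that the maximal element decides every sentence (where, strictly, one should pass to the deductive closure of $T_{\max} \cup \{\varphi\}$, and the final derivation of $\lnot\varphi$ is an instance of Negation Introduction rather than $\to$I) --- but these are refinements of, not departures from, the paper's argument.
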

    \begin{proof}
        Let $T$ be consistent. Consider the partially ordered by inclusion set $A$ of all consistent extensions $T'$ of $T$. We claim that $A$ has a maximal element.
        \begin{enumerate}
            \item \label{itm:1} All chains in $A$ has an upper bound. Let $\{T_i|i \in I\}$ be a chain, then $T' = \bigcup T_i$ is a consistent extension of $T$ containing each $T_i$. So $T'$ is an upper bound.
            \item From \ref{itm:1}, $A$ has a maximal element $T_m$ (Zorn’s lemma).
            \item Trivially we can see that $T_m$ is a maximally consistent extension of $T$, in the sense of $\subseteq$, therefore, $T$ is contained in the maximally consistent theory $T_m$.{\qedhere}
        \end{enumerate}
    \end{proof}
    \newpage
    \begin{lemma}
        An extension of a Henkin theory with the same language is again a Henkin theory.
    \end{lemma}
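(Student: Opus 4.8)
The plan is to unwind the definition of a Henkin theory and observe that the witnessing axioms survive passage to any extension over the same language. Write $T$ for the given Henkin theory in language $L$ and let $T'$ be an extension of $T$ with the same language, so that $T \subseteq T'$ and both are theories in $L$. I want to verify the defining clause of a Henkin theory for $T'$: for every sentence of the form $\exists x\, \varphi(x)$ in $L$, there is a constant $c$ of $L$ with $\exists x\, \varphi(x) \to \varphi(c) \in T'$.

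First I would fix an arbitrary existential sentence $\exists x\, \varphi(x)$ of $L$. Since the language has not changed, this is exactly the same stock of existential sentences for which $T$ already supplies witnesses. Because $T$ is a Henkin theory, there is a constant $c$ of $L$ such that $\exists x\, \varphi(x) \to \varphi(c) \in T$. Then, using the hypothesis $T \subseteq T'$, I would conclude immediately that $\exists x\, \varphi(x) \to \varphi(c) \in T'$, so the very same constant $c$ serves as a witness in $T'$. As $\exists x\, \varphi(x)$ was arbitrary, every existential sentence of $L$ has a witness belonging to $T'$, which is precisely the Henkin condition; hence $T'$ is a Henkin theory.

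The argument is essentially immediate, so there is no serious obstacle — the only point that genuinely requires the hypotheses is the phrase \emph{with the same language}. If $T'$ were allowed to live in a larger language $L' \supsetneq L$, then $L'$ could contain new existential sentences $\exists x\, \psi(x)$ whose witnesses are not guaranteed by the Henkin property of $T$ (which speaks only about $L$-sentences), and the conclusion could fail. Keeping the language fixed freezes the set of existential sentences that need witnesses, and upward closure of membership under $T \subseteq T'$ does the rest. The one bookkeeping remark worth making explicit is that $T'$ is indeed a theory (closed under derivability), which is already built into the notion of \emph{extension} as recorded in the earlier definition, so nothing further must be checked there.
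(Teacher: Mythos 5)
Your proof is correct: since the language is fixed, the set of existential sentences $\exists x\,\varphi(x)$ needing witnesses is unchanged, and each witnessing implication $\exists x\,\varphi(x)\to\varphi(c)\in T$ passes to $T'$ by the inclusion $T\subseteq T'$. The paper states this lemma without any proof, and your argument is exactly the standard one-line justification intended (as in van Dalen), with your closing remark correctly identifying why the same-language hypothesis is the essential one.
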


    \begin{lemma}[Model Existence Lemma]\label{lem:mel}
        If $\Gamma$ is consistent, then $\Gamma$ has a model.
    \end{lemma}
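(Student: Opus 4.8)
The plan is to follow the standard Henkin-style construction, assembling the required structure from the syntactic material already prepared by the preceding lemmas. Since a set of sentences $\Gamma$ is consistent if and only if the theory $T := \{\varphi \mid \Gamma \vdash \varphi\}$ it generates is consistent, and since any model of $T$ is automatically a model of $\Gamma$, I would first pass to $T$ and work with theories throughout.

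First I would extend $T$ to a theory equipped with enough syntactic witnesses. Applying the lemma on $T_\omega$, I form $T_0 := T$, $T_{n+1} := (T_n)^*$, and $T_\omega := \bigcup_n T_n$; that lemma guarantees $T_\omega$ is a Henkin theory and is conservative over $T$, so in particular $T_\omega$ remains consistent. Next I would invoke Lindenbaum's lemma to enlarge $T_\omega$ to a maximally consistent theory $T_m$ in the same language $L_\omega$. Because $T_m$ is an extension of the Henkin theory $T_\omega$ with the same language, the penultimate lemma tells us $T_m$ is again a Henkin theory. Thus $T_m$ is simultaneously maximally consistent and Henkin, and these are exactly the two properties that drive the construction.

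Then I would build the canonical term model $\mathcal{M}$. Its domain is the set of closed terms of $L_\omega$, quotiented by the relation $t \sim s \iff (t = s) \in T_m$ when the language carries equality; each function and relation symbol is interpreted by reading off membership in $T_m$ (for instance $\mathcal{M} \models R(t_1,\dots,t_n)$ iff $R(t_1,\dots,t_n) \in T_m$). The heart of the argument is the Truth Lemma: for every sentence $\varphi$, $\mathcal{M} \models \varphi$ if and only if $\varphi \in T_m$, proved by induction on the complexity of $\varphi$. The connective cases follow from maximal consistency (exactly one of $\psi$, $\lnot\psi$ lies in $T_m$, and $\psi \lor \chi \in T_m$ iff $\psi \in T_m$ or $\chi \in T_m$), while the universal case reduces to checking all closed instances.

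The step I expect to be the main obstacle is the existential case of the Truth Lemma, which is precisely where the Henkin property is indispensable: to show that $\mathcal{M} \models \exists x\, \varphi(x)$ forces $\exists x\, \varphi(x) \in T_m$, I would use the witnessing axiom $\exists x\, \varphi(x) \to \varphi(c) \in T_m$ to produce a closed term $c$ whose equivalence class realizes $\varphi$, and conversely read a witness out of a true existential. Once the Truth Lemma is in hand the conclusion is immediate: every sentence of $T_m$ holds in $\mathcal{M}$, and since $\Gamma \subseteq T \subseteq T_m$, the structure $\mathcal{M}$ is a model of $\Gamma$. A secondary technical point deserving care is checking that $\sim$ is a congruence, so that the quotient interpretation of the function and relation symbols is well defined; this follows from the equality axioms being provable in $T_m$.
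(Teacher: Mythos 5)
Your proposal is correct and takes essentially the same route as the paper's own proof: pass from $\Gamma$ to the theory $T$ it generates, obtain a maximally consistent Henkin extension $T_m$ by chaining the $T_\omega$ construction, Lindenbaum's lemma, and the lemma that a same-language extension of a Henkin theory is Henkin, then form the quotient term model over closed terms modulo $t \sim s \iff T_m \vdash t = s$ and prove the Truth Lemma $\mathfrak{A} \models \alpha \leftrightarrow T_m \vdash \alpha$ by induction. The only cosmetic difference is that you present the crucial quantifier case existentially, whereas the paper handles $\forall x\,\psi(x)$ via $\exists x\,\lnot\psi(x)$ and its witness --- the same use of the Henkin property.
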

    \begin{proof}
        Let the theory given by $T$ to be $T=\{\delta|\Gamma\vdash\delta\}$. Trivially, any model of $T$ is also amodel of $\Gamma$. Let the maximally consistent Henkin extension of $T$ to be $T_m$ using $T_m$ itself. Recall, that the language is nothing but a set of strings of symbols.
        \begin{enumerate}
            \item $A=\{t \in L_m| t $ is closed$\}$.
            \item We define a function $\hat{f}(t_1,...,t_k):=f(t_1,...,t_k)$, for each function symbol $\bar{f}$.
            \item We define a relation $\hat{P}\subseteq A^p$ by $<t_1,...,t_p> \in \hat{P} \leftrightarrow T_m \vdash P(t_1,...,t_p)$, for each predicate symbol $\bar{P}$.
            \item We define a constant $\hat{c}:=c$, for each constant $c$.
        \end{enumerate}
        We can assert that:
        \begin{enumerate}[label=(\alph*)]
            \item The relation $t \sim s$ defined by $T_m \vdash t = s for t, s \in A$ is an equivalence relation.
            \item $t_i \sim s_i \ (i \le p)$ and $<t_1,...,t_p> \in \hat{P}\to <s_1,...,s_p>\in \hat{P}$. $t_i \sim s_i \ (i \le k) \to \hat{f}(t_1,...,t_k) \sim \hat{f}(s_1,...,s_k)$ for all symbols $P$ and $f$.
        \end{enumerate}
        As we have the equivalence relation, it is natural to introduce the quotient structure.\\
        Denote the equivalence class of $t$ under $\sim$ by $[t]$.\\
        Define $\mathfrak{A} := <A/\sim,\tilde{P_1},...,\tilde{P_n}, \tilde{f_1},...,\tilde{f_m},\{\tilde{c_i}|i\in I\}$, where:
        \begin{itemize}
            \item $\tilde{P_i}:=\{<[t_1],...,[t_{r_i}]> | <[t_1],...,[t_{r_i}]> \in \hat{P_i}\}$.
            \item $\tilde{f_j}([t_1],...,[t_{a_j}])=[\hat{f_j}(t_1,...,t_{a_j})]$.
            \item $\tilde{c_i}:=[\hat{c_i}]$.
        \end{itemize}
        \newpage
        \noindent By induction we can prove $\mathfrak{A}\models\alpha(t)\leftrightarrow T_m \vdash \alpha(t)$ for all sentences in the language $L_m$ of $T_m$ (a.k.a $L(\mathfrak{A})$)
        \begin{enumerate}[label=(\roman*)]
            \item $\alpha$ is atomic. $\mathfrak{A}\models P(t_1,...,t_p)\leftrightarrow <t^{\mathfrak{A}}_1,...,t^{\mathfrak{A}}_p>\in\tilde{p}\leftrightarrow<[t_1],...,[t_p]>\in\tilde{P}\leftrightarrow<t_1,...,t_p>\in\hat{P}\leftrightarrow T_m\vdash P(t_1,...,t_p)$.
            \item Trivially, $\alpha = \delta \land \tau$.
            \item $\alpha = \delta \to \tau$. We can see that $T_m\vdash\delta\to\tau \leftrightarrow(T_m\vdash\delta\to T_m\vdash\tau)$.
            \item $\alpha = \forall{x}\psi(x)$. $\mathfrak{A}\models\forall{x}\psi(x)\leftrightarrow\mathfrak{A}\nvDash\exists{x}\lnot\psi(x)\leftrightarrow\mathfrak{A}\nvDash\lnot\psi(\bar{a})$, for all $a \in|\mathfrak{A}|\leftrightarrow$ for all $a \in |\mathfrak{A}|(\mathfrak{A}\models\psi(\bar{a}))$. We assume $\mathfrak{A}\models\forall{x}\psi(x)$, we get $\mathfrak{A}\models\psi(c)$ for witness c belong to $\exists{x}\lnot\psi(x)$. By induction hypothesis $T_m \vdash\psi(c)$. $T_m\vdash\exists{x}\lnot\psi(x)\to\lnot\psi(c)$, so $T_m \vdash\psi(c)\to\lnot\exists{x}\lnot\psi(x)$. Thus, $T_m\vdash\forall{x}\alpha(x)$.\\
            Contrarily, $T_m\vdash\forall{x}\psi(x)\to T_m\vdash\psi(t)$, so $T_m \vdash \psi(t)$ for all closed $t$. By induction hypothesis, $\mathfrak{A}\models\psi(t)$ for all closed $t$. Thus, $\mathfrak{A} \models\forall{x}\psi(x)$. We can see that $\mathfrak{A}$ is a model of $\Gamma$, as $\Gamma\subseteq T_m$. {\qedhere} 
        \end{enumerate}
        The model constructed above is known canonical model or the closed term model.
        
    \end{proof}

    From \ref{lem:mel} we can immediately deduce Gödel's completeness theorem

    \begin{theorem}[Completeness]\label{thm:completeness}
        Let $\varphi$ be any formula and $\varGamma$ a set of formulas in first-order language L. If $\varGamma \models \varphi$, then $\varGamma \vdash \varphi$ \cite{goldrei2005propositional}.
            \begin{equation*}
                \varGamma \models \varphi \to \varGamma \vdash \varphi
            \end{equation*}
    \end{theorem}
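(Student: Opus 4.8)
The plan is to derive completeness from the Model Existence Lemma (Lemma~\ref{lem:mel}) by a contrapositive argument that converts the semantic hypothesis into a statement about consistency. Rather than constructing a derivation of $\varphi$ directly out of $\varGamma \models \varphi$, I would assume that no such derivation exists and then manufacture a countermodel, thereby contradicting the semantic hypothesis. This is the standard route once a model-existence result is in hand, and it keeps the remaining work short.

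First I would reformulate the goal contrapositively: it suffices to show that if $\varGamma \nvdash \varphi$, then $\varGamma \nvDash \varphi$. So assume $\varGamma \nvdash \varphi$. The key observation is that this assumption makes the set $\varGamma \cup \{\lnot\varphi\}$ consistent. Indeed, were $\varGamma \cup \{\lnot\varphi\}$ inconsistent, we would have $\varGamma, \lnot\varphi \vdash \bot$, and then Negation Introduction ($\lnot$I) yields $\varGamma \vdash \lnot\lnot\varphi$, whence Negation Elimination ($\lnot$E) gives $\varGamma \vdash \varphi$, contradicting our assumption. Hence $\varGamma \cup \{\lnot\varphi\}$ is consistent.

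Next I would apply the Model Existence Lemma to this consistent set: since $\varGamma \cup \{\lnot\varphi\}$ is consistent, it has a model $\mathfrak{A}$. By construction $\mathfrak{A}$ satisfies every sentence of $\varGamma$ and also satisfies $\lnot\varphi$, so $\mathfrak{A} \models \varGamma$ while $\mathfrak{A} \nvDash \varphi$. This exhibits a model of $\varGamma$ in which $\varphi$ fails, which is precisely $\varGamma \nvDash \varphi$. Reading the contrapositive back then gives that $\varGamma \models \varphi$ entails $\varGamma \vdash \varphi$, as required.

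The genuine difficulty has already been absorbed into the Model Existence Lemma, whose term-model construction (passing to the Henkin theory $T_\omega$, taking a maximally consistent extension via Lindenbaum's Lemma, and forming the quotient structure $\mathfrak{A}$) is where all the real effort lies. Given that lemma, the only delicate point remaining in the present argument is the purely proof-theoretic step turning underivability of $\varphi$ into consistency of $\varGamma \cup \{\lnot\varphi\}$. I would flag that this step leans essentially on reductio ad absurdum ($\lnot$I) together with double-negation elimination ($\lnot$E), so the argument is sensitive to the classical character of the calculus introduced in Section~\ref{section:IR}; a purely intuitionistic fragment would not license the same conclusion.
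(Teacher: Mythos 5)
Your proposal is correct and follows exactly the route the paper intends: the paper itself gives no separate argument for completeness beyond the remark that it follows ``immediately'' from the Model Existence Lemma (Lemma~\ref{lem:mel}), and your contrapositive argument --- underivability of $\varphi$ gives consistency of $\varGamma \cup \{\lnot\varphi\}$ via $\lnot$I and $\lnot$E, then model existence yields a countermodel --- is precisely that immediate deduction, spelled out. Your closing remark about the classical character of the step (reliance on double-negation elimination) is a sound observation that the paper does not make explicit.
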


    \pagebreak


    \section{Examples}
    There are different styles for representing the proof of an argument, e.g. Gentzen-style, Fitch-style, and others. This paper will follow Fitch-style to solve the examples in this section \cite{ly2017proof}.
    
    \subsection{Mortality and Socrates}
    All humans are mortal, Socrates is human. Therefore, someone
    is mortal \cite{barker2011language}.
    \begin{itemize}
        \item H(x): x is human
        \item M(x): x is mortal
        \item s: Socrates
    \end{itemize}

    \medskip

    \noindent\textbf{Indirect Proof (Proof by Contradiction)}\\
    \begin{fitch}
        \fa\forall{x}(H(x)\rightarrow M(x))    & Premise\\
        \fa H(s)                                & Premise\\
        \fa\fh\lnot\exists{x}M(x)                  & Assumption\\ 
        \fa\fa\forall{x}\lnot M(x)                 & Def, 3\\
        \fa\fa H(s)\rightarrow M(s)                & $\forall$E, 1\\
        \fa\fa\lnot M(s)                           & $\forall$E, 4\\
        \fa\fa M(s)                                & $\rightarrow$E, 5, 2\\
        \fa\fa\bot                                 & $\bot$, 6, 7\\
        \fa\exists{x}M(x)                      & $\lnot$I, 3-8
    \end{fitch}

    \medskip

    \noindent Start with assuming $\lnot\exists{x}$M(x) and try to find a counter-example, hence the name (proof by contradiction). Reductio ad absurdum is a rule to show that if an assumption leads to a contradiction, then the negation of that assumption must be true \cite{van2004logic}.

    \medskip

    \noindent\textbf{Direct Proof}\\
    \begin{fitch}
        \forall{x}(H(x)\rightarrow M(x))    & Premise\\
        H(s)                                & Premise\\
        H(s) \rightarrow M(s)               & $\forall$E, 1\\ 
        M(s)                                & $\rightarrow$E, 3, 2\\
        \exists{x}M(x)                      & $\exists$I, 4
    \end{fitch}

    \pagebreak

    \subsection{Living Trees}
    All trees are plants, All plants are living things. Therefore, all
    trees are living things.
    \begin{itemize}
        \item T(x): x is tree
        \item P(x): x is plant
        \item L(x): x is a living thing
    \end{itemize}

    \medskip

    \noindent\textbf{Indirect Proof (Proof by Contradiction)}\\
    \begin{fitch}
        \fa\forall{x}(T(x)\rightarrow P(x))    & Premise\\
        \fa\forall{x}(P(x)\rightarrow L(x))    & Premise\\
        \fa\fh\lnot\forall{x}(T(x)\rightarrow L(x))     & Assumption\\ 
        \fa\fa\exists{x}\lnot(T(x)\rightarrow L(x))     & Def, 3\\
        \fa\fa\lnot(T(a)\rightarrow L(a))               & $\exists$E, 4\\
        \fa\fa T(a) \land\lnot L(a)                     & EQUIV, 5\\
        \fa\fa T(a)\rightarrow P(a)                     & $\forall$E, 1\\
        \fa\fa P(a)\rightarrow L(a)                     & $\forall$E, 2\\
        \fa\fa T(a)                                     & $\land$E, 6\\
        \fa\fa P(a)                                     & $\rightarrow$E, 7, 9\\
        \fa\fa \lnot L(a)                               & $\land$E, 6\\
        \fa\fa L(a)                                     & $\rightarrow$E, 8, 10\\
        \fa\fa\bot                                      & $\bot$, 11, 12\\
        \fa\forall{x}(T(x)\rightarrow L(x))    & $\lnot$I, 3-13
    \end{fitch}

    \pagebreak

    \noindent\textbf{Direct Proof}\\
    \begin{fitch}
        \fa\forall{x}(T(x)\rightarrow P(x))    & Premise\\
        \fa\forall{x}(P(x)\rightarrow L(x))    & Premise\\
        \fa T(a) \rightarrow P(a)              & $\forall$E, 1\\ 
        \fa P(a) \rightarrow L(a)              & $\forall$E, 2\\
        \fa\fh T(a)                             & Assumption\\
        \fa\fa P(a)                             & $\rightarrow$E, 3, 5\\
        \fa\fa L(a)                             & $\rightarrow$E, 4, 6\\ 
        \fa T(a) \rightarrow L(a)              & $\rightarrow$I, 5-7\\ 
        \fa\forall{x}(T(x)\rightarrow L(x))    & $\forall$I, 8
    \end{fitch}

    \subsection{Cats and Rabitts}
    Some cats have fur or some cat are rabbits. Therefore,
    some cats are rabbits or have fur.
    \begin{itemize}
        \item F(x): x has fur
        \item R(x): x is a rabbit
    \end{itemize}

    \noindent\textbf{Indirect Proof (Proof by Contradiction)}\\
    \begin{fitch}
        \fa\exists{x}F(x)\lor\exists{x}R(x)      & Premise\\
        \fa\fh\lnot\exists{x}(F(x) \lor R(x))       & Assumption\\
        \fa\fa F(c)                                 & Assumption\\                     
        \fa\fa\forall{x}\lnot(F(x) \lor R(x))       & Def, 2\\
        \fa\fa\lnot F(c)                            & $\forall$E, 5\\
        \fa\fa\bot                                  & $\bot$, 3, 6\\
        \fa\exists{x}(F(x) \lor R(x))            & $\lnot$I, 2-6
    \end{fitch}

    \pagebreak

    \noindent\textbf{Direct Proof}\\
    \begin{fitch}
        \fa\exists{x}F(x)\lor\exists{x}R(x)      & Premise\\
        \fa\fh\exists{x}F(x)                     & Assumption\\
        \fa\fa F(c)                              & $\exists$E, 2\\
        \fa\fa F(c)\lor R(c)                     & $\lor$I, 3\\
        \fa\exists{x}(F(x) \lor R(x))            & $\exists$I, 4\\
    \end{fitch}

    \newpage
    \bibliographystyle{plain}
    \bibliography{biblio.bib}
\end{document}